\DeclareSymbolFont{SY}{U}{psy}{m}{n}
\DeclareMathSymbol{\emptyset}{\mathord}{SY}{'306}
\renewcommand{\eqref}[1]{{\rm(\ref{#1})}}
\newcommand{\bbC}{{\mathbb C}}
\newcommand{\bbR}{{\mathbb R}}
\newcommand{\bbN}{{\mathbb N}}
\newcommand{\re}{{\rm e}}
\newcommand{\ri}{{\rm i}}
\newcommand{\sE}{{\sf E}}
\newcommand{\fH}{\mathfrak{H}}
\newcommand{\fM}{\mathfrak{M}}
\newcommand{\fN}{\mathfrak{N}}
\newcommand{\fP}{\mathfrak{P}}
\newcommand{\fQ}{\mathfrak{Q}}
\newcommand{\fR}{\mathfrak{R}}
\newcommand{\lal}{{\langle}}
\newcommand{\ral}{{\rangle}}
\newcommand{\be}{\begin{equation}}
\newcommand{\ee}{\end{equation}}
\DeclareMathOperator{\spec}{spec}
\newcommand{\Ran}{\mathop{\mathrm{Ran}}}
\newcommand{\Dom}{\mathop{\mathrm{Dom}}}
\numberwithin{equation}{section}
\newtheorem{theorem}{Theorem}[section]
\newtheorem{corollary}[theorem]{Corollary}
\newtheorem{lemma}[theorem]{Lemma}
\newtheorem{proposition}[theorem]{Proposition}
\theoremstyle{definition}
\theoremstyle{remark}
{\it}{\rm}
\newtheorem{remark}[theorem]{Remark}
\newtheorem{example}[theorem]{Example}
\begin{document}

\title[Optimal bounds on the speed of subspace evolution]
{
Optimal bounds on the speed of subspace evolution$^*$$^\dagger$}\thanks{$^*$%
This work was partially supported by the Heisenberg-Landau Program.}%
\thanks{$^\dagger$Published in \textit{J. Phys. A: Math. Theor.}
\textbf{55}:23 (2022), 235203 [17 pp.] (https://doi.org/10.1088/1751-8121/ac6bcf).}

\author[S. Albeverio and A. K. Motovilov]
{Sergio Albeverio and  Alexander K. Motovilov}

\address{Sergio Albeverio,
Institute of Applied Mathematics, and Hausdorff Center of
Mathematics, University of Bonn, Endenicher Allee 60, D-53115 Bonn,
Germany} \email{albeverio@uni-bonn.de}

\address{Alexander K. Motovilov, Bogoliubov Laboratory of
Theoretical Physics, JINR, Joliot-Cu\-rie 6, 141980 Dubna, Russia,
and Dubna State University, Universitetskaya 19,
141980 Dubna, Russia} \email{motovilv@theor.jinr.ru}


\subjclass[2000]{Primary 47A15, 47A62; Secondary 47B15, 47B25}

\keywords{Mandelstam-Tamm inequality;
quantum speed limit; subspace evolution}

\begin{abstract}

By a quantum speed limit one usually understands an estimate on how
fast a quantum system can evolve between two distinguishable states.
The most known quantum speed limit is given in the form of the
celebrated Mandelstam-Tamm inequality that bounds the speed of the
evolution of a state in terms of its energy dispersion. In contrast
to the basic Mandelstam-Tamm inequality, we are concerned not with a
single state but with a (possibly infinite-dimensional) subspace
which is subject to the Schr\"odinger evolution. By using the
concept of maximal angle between subspaces we derive optimal bounds
on the speed of such a subspace evolution. These bounds may be
viewed as further generalizations of the Mandelstam-Tamm inequality.
Our study includes the case of unbounded Hamiltonians.

\end{abstract}

\maketitle

\section{Introduction}
\label{SIntro}

By a \textit{quantum speed limit} one usually calls a lower bound on
the time that is needed for a quantum system to evolve from a given
state to a target state or a target subspace. The history of the
subject is already long, being traced back to the 1945's pioneering
work by Mandelstam and Tamm \cite{MaT}. The volume of literature on
quantum speed limits and their applications in a variety of areas is
large and by no means we make here an attempt to present a more
or less complete review of all relevant results. Instead, we only
inform the interested reader that comprehensive surveys of the
literature on various quantum speed limits may be found in the
recent review articles \cite{DeCa} and \cite{Frey} (see also the
introductory part of the very recent paper \cite{Lychko}).

We begin with recalling how the main known quantum speed
limits look. To this end, we consider an isolated quantum system
described by a Hamiltonian $H$, which is assumed to be a
time-independent self-adjoint operator acting in the complex Hilbert
space $\mathfrak{H}$. Any vector $\phi$ from the unit sphere in
$\fH$ represents a possible pure state of this system. Strictly
speaking, a pure state $\mathcal S$ is rather a class of equivalence
of norm-one vectors in $\mathfrak{H}$: the vectors
$\phi,\psi\in\mathfrak{H}$ with $\|\phi\|=\|\psi\|=1$ represent the
same pure state if  $\psi=u\phi$ for some $u\in\bbC$ such that
$|u|=1$. In an obvious way, the state $\mathcal S$ may be identified
with a one-dimensional subspace $\mathfrak{P}_{\mathcal{S}}$
which is the span of an arbitrarily chosen vector $\psi$ in
$\mathcal S$,\,\, $\mathfrak{P}_{\mathcal S}:=\bigl\{f=\lambda
\psi\,\,\, \big| \lambda\in\mathbb{C}\,\bigr\}$.

In what follows we will always suppose that units of measurement are
chosen such that $\hbar=1$. The time evolution of a state
vector $\psi(t)\in\mathfrak{H}$, $t\in\mathbb{R}$, is assumed to be
governed by the Schr\"odinger equation
\begin{align}
\label{Sch1p}
\mathrm{i} \frac{d}{dt}\psi & =H\psi,\\
\label{Sch2p} \psi(t)\bigr|_{t=0}&=\psi_0,
\end{align}
where the initial-state vector $\psi_0$, $\|\psi_0\|=1$, along with all other
vectors on the path $\psi(t)$, $t\in\bbR$, should belong to
the domain $\mathrm{Dom}(H)$ of the Hamiltoninan $H$.

Starting point in the study of quantum speed limits was the
following natural question: \textrm{How fast can a quantum system
with the Hamiltonian $H$ arrive at a state orthogonal to its initial
state~$\psi_0$? }

It is obvious that the answer to this question may be important in various
respects. Perhaps, the very latest motivation stems from quantum information
theory and quantum computing (see, e.g., \mbox{\cite{DeCa,Frey}}).

Available answers to the above question have been given in the form
of lower bounds for the orthogonalization time $T_\perp$ which is
the time necessary for the system to evolve from the initial state
$\psi_0$ to a state $\psi(T_\perp)$ such that ${\langle}
\psi_0,\psi(T_\perp){\rangle}=0$. (Here and in what follows by
$\lal\cdot,\cdot\ral$ we denote the inner product in the Hilbert
space $\fH$ assuming that it is linear in the first entry.)

The oldest among these bounds is the celebrated
\textit{Mandelstam--Tamm inequality} discovered in the 1945's paper
\cite{MaT}:
\begin{equation}
\label{MT} T_\perp\geq \frac{\pi}{2\, \Delta E},
\end{equation}
where $\Delta E$ is the energy dispersion for the initial
state $\psi_0$,
\begin{equation}
\label{DelE}
\Delta E=\bigl(\| H\psi_0\|^2 -{\langle} H\psi_0,\psi_0{\rangle}^2\bigr)^{1/2},
\quad \psi_0\in\mathrm{Dom}(H).
\end{equation}

The second celebrated lower bound for the orthogonalization time, the
\textit{Margolus--Levitin inequality} \cite{MaLe} has been
discovered half a century later, in 1998. This bound reads as
\begin{equation}
\label{ML} T_\perp\geq \frac{\pi}{2\, \delta E},
\end{equation}
where the quantity
\begin{equation}
\delta E= {\langle}
H\psi_0,\psi_0{\rangle}-\min\bigl(\mathop{\mathrm{spec}}(H)\bigr)
\end{equation}
represents the difference between the average energy for the state
$\psi_0$ and the lower edge of the spectrum of the Hamiltonian $H$
(which is assumed to be semibounded from below in this case).

The lower bounds \eqref{MT} and \eqref{ML} are not equivalent to
each other but both of them have been proven to be optimal (see,
e.g., \cite[p. 7]{DeCa} and \cite[p. 3923]{Frey}).

Of course, one notices that by their form the bounds \eqref{MT} and
\eqref{ML} resemble the uncertainty relation for energy and time.
These bounds, however, are related not to the standard deviation in
the measuring of the quantity $t$ but to the well-established time
needed for a state of the system to evolve into an orthogonal state.
Thus, in their essence the inequalities \eqref{MT} and \eqref{ML}
are very different from the uncertainty relation.

Next, there is a version of the Mandelstam--Tamm inequality
that works for intermediate time moments $t\in(0,T_\perp)$. This is
the lower estimate found for the first time in 1973 by Fleming \cite{Flem}:
\begin{equation}
\label{Flem} T_\theta\geq \frac{\theta}{\Delta E}\,,
\end{equation}
where $\Delta E$ is again given by \eqref{DelE} and $T_\theta$ denotes the
first time moment when the acute angle
\begin{equation}
\label{thetNaiv}
\angle\bigl(\psi_0,\psi(t)\bigr):=\arccos|{\langle}\psi_0,\psi(t){\rangle}|
\end{equation}
between the vectors $\psi_0$ and $\psi(t)$ reaches a certain value
$\theta\in(0,\pi/2]$.

It is worth to remark that, through the years, the Mandelstam-Tamm
bound \eqref{MT}/\eqref{Flem} has been rediscovered several times
(for related discussion and references, see, e.g.,
\cite[p.\,5]{DeCa}). The Mandelstam-Tamm bound has also been
extended to the evolution of mixed states \cite{Uhlmann}.
Furthermore, more detailed estimates for the evolution speed have
been established for particular classes of evolutionary problems
(see \cite{DeCa,Frey}). Probably, the latest among them is a speed
limit for evolution of thermal states derived in \cite{Lychko}.
Mandelstam-Tamm-type bounds for the orthogonalization time exist
even for some non-self-adjoint (so-called pseudo-Hermitian and, in
particular, $PT$-symmetric) Hamiltonians (see \cite{BBr,China} and
references therein).

In our recent work \cite{AM-PEPAN} we have generalized the
Mandelstam-Tamm-Fleming bound \eqref{Flem} to the Schr\"o\-din\-ger
evolution of a subspace. Like in the vast majority of publications
on quantum speed limits, in  \cite{AM-PEPAN} we restricted ourselves
to the exclusive consideration of bounded Hamiltonians. However typical
quantum-mechanical Hamiltonians are unbounded operators.  In the
present work we drop the requirement of boundedness of $H$ and
extend most of the results of \cite{AM-PEPAN} to the subspace
evolution governed by arbitrary self-adjoint Hamiltonians. Assuming
that $P_0$ is an orthogonal projection in $\fH$ such that the domain
of a (possibly unbounded) self-adjoint operator (Hamiltonian) $H$ is
invariant under $P_0$, that is, $P_0\,\Dom(H)\subset\Dom(H)$, we
study the subspace path $\fP_t=\Ran(P_t)$, $t\in\bbR$, formed in the
set of all the subspaces of $\fH$ by the ranges of the orthogonal
projections $P_t=\re^{-\ri Ht}P_0\re^{\ri Ht}$, $t\in\bbR$.

Our studies of the subspace evolution path $\fP_t$, $t\in\bbR$, are
essentially based on the concept of maximal angle\footnote{For
discussion of the concept of maximal angle and references see page
\pageref{Pthet}.} between two subspaces of a Hilbert space. Recall,
that the maximal angle between (arbitrary) subspaces $\fQ$ and $\fR$
of the Hilbert space $\fH$ is introduced as follows:
\begin{equation}
\label{thetmax}
\vartheta(\fQ,\fR)=\arcsin\|Q-R\|,
\end{equation}
where $Q$ and $R$ are the orthogonal projections in $\fH$ onto
$\fQ$ and $\fR$, respectively. The maximal angle \eqref{thetmax}
possesses all the properties of a distance, and thus it generates
a metric on the set of all subspaces of $\fH$. By using this metric we
establish, in particular, the following result (see Theorem
\ref{TTtheta} below).

Assume that $T_\theta$ is the time moment at which the maximal angle
between the initial subspace $\fP_0$ and a subspace in the path
$\fP_t$, $t\geq 0$, reaches a certain value $\theta\in(0,\pi/2]$.
Then necessarily
\begin{equation}
\label{thetInt} T_\theta\geq\frac{\theta}{\Delta
E_{\mathfrak{P}_0}}\,,
\end{equation}
where
\begin{equation}
\label{DeltaIn} \Delta
E_{\mathfrak{P}_0}=\sup\limits_{\scriptsize\begin{matrix}\psi\in\fP_0\cap\Dom(H)\\
\|\psi\|=1\end{matrix}} \bigl(\|H\psi\|^2-{\langle}
H\psi,\psi{\rangle}^2\bigr)^{1/2}.
\end{equation}
Clearly, the quantity $\Delta E_{\mathfrak{P}_0}$ is nothing but the
least upper bound for the energy dispersion on the states belonging
to the initial subspace $\fH_0.$

The Mandelstam-Tamm-Fleming bound \eqref{Flem} turns out to be a special
case of the estimate \eqref{thetInt} for a one-dimensional subspace
$\fP_0$ spanned by a particular state $\psi_0$.

The plan of the paper is as follows. In Section \ref{SecPropPath} we
collect some facts on the projection path $P_t=\re^{-\ri
Ht}P_0\re^{\ri Ht}$, $t\in\bbR$. In particular, we notice that this
path is strongly continuous on the whole Hilbert space $\fH$. Under
the assumption $P_0\,\Dom(H)\subset\Dom(H)$, the path $P_t$,
$t\in\bbR$, is, in addition, strongly differentiable in $t\in\bbR$
on the domain of $H$. In such a case the strong derivative
$\dot{P}_t$ is expressed through the commutator of $H$ and $P_t$
(see Theorem \ref{Lmain}). In Section \ref{SBounds} we work under
the additional hypothesis that the commutator of $H$ and $P_0$ is a
bounded operator on $\Dom(H)$ and, hence, its closure is a bounded
operator on the whole space $\fH$. The main result of the section is
Theorem \ref{Th2}. It presents the upper bound \eqref{TBound} for
the maximal angle between the subspaces $\fP_s$ and $\fP_t$,
$s,t\in\bbR$, in the subspace path $\fP_\tau=\Ran(P_\tau)$,
$\tau\in\bbR$, through the product of the times difference $|t-s|$
and the norm of the commutator of $H$ and $P_0$. This section also
contains the proof of Theorem \ref{TTtheta} that we already
mentioned.  The section is concluded with a new consideration of the
case where $H$ is bounded operator. In such a case the quantity
\eqref{DeltaIn} is bounded by half the distance between the upper
and lower edges of the spectrum of $H$. Combining this with
\eqref{thetInt} we obtain our last lower bound for $T_\theta$ (see
Theorem \ref{ThFin}) in this paper. Finally, in Section \ref{sConcl}
we present a summary of the work and point out some open problems.

Let us add a few words about notations used thro\-ug\-h\-o\-ut the
paper. By a subspace we always un\-der\-stand a closed linear subset
of a Hilbert space. The identity operator is denoted by $I$. For a
linear operator $L$, by $\Dom(L)$ we denote its domain and by
$\Ran(L)$, its range. If $Q$ is an orthogonal projection, the
notation $Q^\perp$ is always used for the complementary projection,
$Q^\perp=I-Q$. By $\fM\oplus\fN$ we understand the orthogonal sum of
two Hilbert spaces (or orthogonal subspaces or simply orthogonal
linear subsets) $\fM$ and $\fN$. By $\sE_T(\sigma)$ we always denote
the spectral projection of a self-adjoint operator $T$ associated
with a Borel set $\sigma\subset\bbR$. Notation $[A,B]$ is used for
the commutator of linear operators $A$ and $B$ on $\fH$. It is
assumed that
$\Dom\bigl([A,B]\bigr):=\bigl\{x\in\Dom(A)\cap\Dom(B)\,\big|\,\,
Ax\in\Dom(B),\, Bx\in\Dom(A)\bigr\}$ and $[A,B]x:=ABx-BAx$ \, for
any \, $x\in\Dom\bigl([A,B]\bigr)$.

\section{Projection path generated by the Schr\"odinger evolution
of a subspace}
\label{SecPropPath}

As it was already underlined, we are concerned not with a single
state but with a multi-dimensional (possibly, even
infinite-dimensional) subspace spanned by the states of the system
that are subject to the Schr\"odinger evolution. In general, we
allow the Hamiltonian $H$ of the system to be an {unbounded}
self-adjoint operator on the Hilbert space $\fH$ with domain
$\Dom(H)$. In this work we restrict ourselves to the consideration
of a nontrivial subspace $\mathfrak{P}_0\subset\mathfrak{H}$,
\mbox{$\mathfrak{P}_0\neq\{0\}$}, such that
\begin{equation}
\label{PDD} P_0\,\,\Dom(H)\subset\Dom(H),
\end{equation}
where $P_0$ stands for the orthogonal projection in $\fH$ onto
$\fP_0$. That is, we assume that the linear set $\Dom(H)$ is
invariant under $P_0$ in the sense that $P_0f\in\Dom(H)$ for any
$f\in \Dom(H)$. From \eqref{PDD} it follows that the set $\Dom(H)$
is also invariant under the complementary orthogonal projection
$P_0^\perp$,
\begin{equation}
\label{PDDp} P_0^\perp\,\,\Dom(H)\subset\Dom(H),
\end{equation}
Moreover, any of the hypotheses \eqref{PDD} and \eqref{PDDp} is
equivalent to any of the equalities
\begin{equation}
\label{main}
\Ran\left(P_0\big|_{\Dom(H)}\right)=\mathfrak{P}_0\cap\Dom(H)\quad
\text{and}\quad
\Ran\left(P^\perp_0\big|_{\Dom(H)}\right)=\mathfrak{P}^\perp_0\cap\Dom(H)
\end{equation}
as well as to the combination of them,
\begin{equation}
\label{basic}
\Dom(H)=\bigl(\fP_0\cap\Dom(H)\bigr)\,\,\oplus\,\,\bigl(\fP_0^\perp\cap\Dom(H)\bigr).
\end{equation}

Since $H$ is a self-adjoint operator, its domain $\Dom(H)$ is dense
in $\fH$. By \eqref{basic}, this implies that the sets
$\fP_0\cap\Dom(H)$ and $\fP_0^\perp\cap\Dom(H)$ are dense in the
subspaces $\fP_0$ and $\fP_0^\perp$, respectively. In particular,
\begin{equation}
\label{basdense} \overline{\fP_0\cap\Dom(H)}=\fP_0.
\end{equation}
From \eqref{main} it follows that the Hamiltonian $H$
admits the following $2\times 2$ block matrix representation with
respect to the decomposition $\fH=\fP_0\oplus\fP_0^\perp$:
\begin{align}
\label{Hmatrix}
H&=H_\mathrm{diag}+H_\mathrm{off},\quad
H_\mathrm{diag}=\left(\begin{array}{cc} H_{\fP_0} & 0
\\ 0 & H_{\fP_0^\perp}\end{array}\right), \quad H_\mathrm{off}=\left(\begin{array}{cc} 0  &
B \\ C & 0\end{array}\right),
\end{align}
where
\begin{align}
\label{HM1}
H_{\fP_0}&=P_0 H|_{\fP_0}, \quad \Dom(H_{\fP_0})=\fP_0\cap\Dom(H),\\
\label{HM2}
H_{\fP^\perp_0}&=P_0^\perp H|_{\fP^\perp_0},\quad
\Dom(H_{\fP^\perp_0})=\fP^\perp_0\cap\Dom(H),\\
\label{HM3}
B&=P_0 H|_{\fP^\perp_0},\quad \Dom(B)=\Dom(H_{\fP^\perp_0}),\\
\label{HM4}
C&=P_0^\perp H|_{\fP_0}, \quad \Dom(C)=\Dom(H_{\fP_0}).
\end{align}
Notice that, in general, $C\subset B^*$ and $B\subset C^*$.

Every vector $\psi_0\subset\mathfrak{P}_0\cap\Dom(H)$ is assumed to
evolve into a vector $\psi(t)\in\Dom(H)$, $t>0$,  according to
\eqref{Sch1p}, \eqref{Sch2p}. It is well known that for any
$\psi_0\in\Dom(H)$ the Cauchy problem \eqref{Sch1p}, \eqref{Sch2p}
has a unique solution $\psi: \mathbb{R}^+\to\fH$ in the class of
norm continuously differentiable vector-valued functions from
$\bbR^+$ to $\fH$ such that $\psi(t)\in\Dom(H)$ for all $t\geq 0$.
For convenience of the reader, we remark that the existence of the
solution to the problem \eqref{Sch1p}, \eqref{Sch2p} in the form
\begin{equation}
\label{U1} \psi(t)=U(t)\psi_0,\quad t>0,
\end{equation}
where
\begin{equation}
\label{UU1}
 U(t)=\mathrm{e}^{-\mathrm{i} Ht}, \quad t\in\mathbb{R},
\end{equation}
follows, e.g., from \cite[Theorem VIII.7]{ReedSimon:1}. Surely, the
exponential \eqref{UU1} is defined by the spectral theorem (see,
e.g., \cite[Theorem VIII.6]{ReedSimon:1}) as
\begin{equation}
\label{Uexp}
 \re^{-\ri Ht}:=\int_\bbR
e^{-\ri \lambda t} \sE(d\lambda),\quad t\in\bbR,
\end{equation}
where $\sE$ stands for the spectral measure on $\bbR$ associated
with the self-adjoint operator $H$. It is worth to notice that the
domain of $H$,
\begin{equation}
\label{DomH}
 \Dom(H)=\left\{f\in\fH\,\bigg|\,\,
\int_\bbR \lambda^2 \lal\sE(d\lambda)f,f\ral\right\},
\end{equation}
is an invariant of $U(t)$,
\begin{equation}
\label{Dinv}
\Ran\left(U(t)|_{\Dom(H)}\right)=\Dom(H),\quad \text{for
any $t\in\bbR$}.
\end{equation}

Given $t\in\bbR$, by $\mathfrak{P}_t$ we denote the range
$\Ran\left(U(t)P_0\right)$ of the product of the unitary operator
\eqref{UU1} and the orthogonal projection $P_0$ onto the subspace
$\fP_0$, that is,
\begin{equation}
\label{fPt}
 \fP_t:=\Ran\bigl(U(t)\big|_{\fP_0}\bigr), \quad t\in\bbR.
\end{equation}

By \eqref{U1}, the subspace $\fP_t$, $t>0$, is nothing but the
closure of the span of the vectors $\psi(t)\subset\Dom(H)$
representing the values, at the time moment $t$, of the
vector-valued functions $\psi:\,\,\bbR^+\to\fH$ that solve
\eqref{Sch1p}, \eqref{Sch2p} for various
$\psi_0\in\mathfrak{P}_0\cap\Dom(H)$.  So that we deal with the path
$ \mathfrak{P}_t,\,\,  t\geq 0,$ in the set of all subspaces of the
Hilbert space $\mathfrak{H}$. Or (and this is the same) with the
path
\begin{equation}
\label{Ppath} P_t, \quad t\geq 0,\qquad
\mathop{\mathrm{Ran}}\bigl(P_t\bigr)=\mathfrak{P}_t,
\end{equation}
of the orthogonal projections $P_t$ in $\mathfrak{H}$ onto the
respective subspaces $\mathfrak{P}_t$. Clearly, the orthogonal
projections $P_t$ onto the subspaces $\fP_t$ defined in \eqref{fPt}
are explicitly given by
\begin{equation}
\label{PT} P_t=U(t)P_0 U(t)^*=\mathrm{e}^{-\mathrm{i}
Ht}P_0\mathrm{e}^{\mathrm{i} Ht},\quad \text{for any $t\in\bbR$}.
\end{equation}
It is almost obvious that the strong continuity of the unitary group
$\re^{-\ri Ht}$, $t\in\bbR$, implies the strong continuity of the
path $P_t$, $t\in\bbR$, on the whole Hilbert space $\fH$.
Under the assumption \eqref{PDD} from \eqref{Dinv} it immediately
follows that the domain of $H$ is mapped by $P_t$ back into the
domain of $H$,
\begin{equation}
\label{PDom} \Ran\left(P_t\big|_{\Dom(H)}\right)\subset\Dom(H),
\quad t\in\bbR.
\end{equation}

For convenience of the reader we present a proof of the strong
differentiability of the projection family \eqref{PT} under the
hypothesis \eqref{PDD}.
\begin{theorem}
\label{Lmain}
Let $H$ be a (possibly unbounded) self-adjoint operator in the
Hilbert space $\fH$. Assume that $P_0$, $P_0\neq 0$, is an
orthogonal projection in $\fH$ and that the domain of $H$ is
invariant under $P_0$, i.e.,
$\Ran\bigl(P_0|_{\Dom(H)}\bigr)\subset\Dom(H)$. Then the projection
path $P_t=\re^{-\ri Ht}P_0\re^{\ri Ht}$, $t\in\bbR$, is strongly
differentiable on $\Dom(H)$ for any $t\in\bbR$, that is, the
following limit exists
\begin{align}
\label{Pder} \dot{P}_tf:=&\lim\limits_{\tau\to 0}
\left(\frac{P_{t+\tau}-P_t}{\tau}f\right),\quad{t\in\bbR},\\
\nonumber
&\text{for any } f\in\Dom(H).
\end{align}
Moreover, the inclusion \eqref{PDom} holds
and the following equality takes place:
\begin{align}
\label{UPSh} \ri \dot{P}_t f = &HP_tf-P_tHf, \quad{t\in\bbR},\\
\nonumber
&\text{for any } f\in\Dom(H).
\end{align}
\end{theorem}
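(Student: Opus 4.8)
The plan is to reduce the whole statement to Stone's theorem for the unitary group $U(t)=\re^{-\ri Ht}$, combined with the hypothesis $P_0\,\Dom(H)\subset\Dom(H)$ and the invariance \eqref{Dinv} of $\Dom(H)$ under every $U(\pm t)$. First I would record the inclusion \eqref{PDom}: by \eqref{PT} one has $P_t=U(t)P_0U(t)^*$ with $U(t)^*=U(-t)$, so for $f\in\Dom(H)$ successively $U(-t)f\in\Dom(H)$ by \eqref{Dinv}, then $P_0U(-t)f\in\Dom(H)$ by \eqref{PDD}, then $P_tf=U(t)\bigl(P_0U(-t)f\bigr)\in\Dom(H)$ again by \eqref{Dinv}. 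This is also what makes the expression $HP_tf$ on the right-hand side of \eqref{UPSh} meaningful.

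For the differentiability I would fix $f\in\Dom(H)$ and $t\in\bbR$ and split the difference quotient in the usual product-rule fashion by inserting the term $U(t+\tau)P_0U(-t)f$:
\[
\frac{P_{t+\tau}-P_t}{\tau}\,f
=U(t+\tau)\,P_0\,\frac{U(-t-\tau)-U(-t)}{\tau}\,f
+\frac{U(t+\tau)-U(t)}{\tau}\,P_0\,U(-t)\,f .
\]
In the first summand, Stone's theorem (applicable since $f\in\Dom(H)$) gives $\tau^{-1}\bigl(U(-t-\tau)-U(-t)\bigr)f\to\ri\,U(-t)Hf$ in norm; applying the bounded operator $P_0$, and then using that the unitaries $U(t+\tau)$ are uniformly bounded with $U(t+\tau)\to U(t)$ strongly as $\tau\to0$ — so that a uniformly bounded, strongly convergent family applied to a norm-convergent sequence converges in norm — one finds that the first summand tends to $\ri\,U(t)P_0U(-t)Hf=\ri\,P_tHf$. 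In the second summand the \emph{fixed} vector $P_0U(-t)f$ lies in $\Dom(H)$ by the previous paragraph, so Stone's theorem applies directly and the summand tends to $-\ri\,H\bigl(U(t)P_0U(-t)f\bigr)=-\ri\,HP_tf$. Hence the limit \eqref{Pder} exists, with $\dot{P}_tf=\ri\,P_tHf-\ri\,HP_tf$, and multiplying by $\ri$ gives precisely \eqref{UPSh}.

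The one point that calls for genuine care, as opposed to routine bookkeeping, is the first summand: since $H$ is unbounded, $U(t)$ is only strongly (not norm) continuous, so one cannot simply pull the limit inside $U(t+\tau)(\cdot)$; the argument must instead run through the uniform boundedness of the unitaries together with the \emph{norm} convergence of the difference quotient of $s\mapsto U(s)f$ on $\Dom(H)$. Everything else is a direct application of Stone's theorem and of the invariance of $\Dom(H)$ under $P_0$ and under $U(\pm t)$.
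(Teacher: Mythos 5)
Your proof is correct and follows essentially the same route as the paper's: the identical insertion of the cross term $U(t+\tau)P_0U(-t)f$ splits the difference quotient into the same two summands, the second handled by Stone's theorem applied to the fixed vector $P_0U(-t)f\in\Dom(H)$, and the first by combining the norm convergence of the difference quotient of $s\mapsto U(-s)f$ with the uniform boundedness and strong convergence of $U(t+\tau)$ (the paper packages this last point as the splitting $g_1=g_{11}+g_{12}$, which is the same argument). No gaps.
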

\begin{proof}
As we already noticed, under the hypothesis
$\Ran\bigl(P_0|_{\Dom(H)}\bigr)\subset\Dom(H)$ the inclusion
\eqref{PDom} follows immediately from \eqref{Dinv}. Assume that
$f\in\Dom(H)$ and $t\in\bbR$. By \eqref{PDom} we have
$P_tf\in\Dom(H)$. Now take $0\neq\tau\in\bbR$ and write
\begin{align}
\nonumber
\frac{P_{t+\tau}-P_t}{\tau}f&
=\frac{\re^{-\ri H(t+\tau)}P_0\re^{\ri H(t+\tau)}-\re^{-\ri Ht}P_0\re^{\ri Ht}}{\tau}f\\
\nonumber
&=\frac{\re^{-\ri H(t+\tau)}P_0\re^{\ri H(t+\tau)}-\re^{-\ri H(t+\tau)}P_0\re^{\ri Ht}
+\re^{-\ri H(t+\tau)}P_0\re^{\ri Ht}-\re^{-\ri Ht}P_0\re^{\ri Ht}}{\tau}f\\
\label{gsum12}
&=g_1(t,\tau,f)+g_2(t,\tau,f),
\end{align}
where the vectors $g_1$ and $g_2$ are given by
\begin{align*}
g_1(t,\tau,f)&=
\re^{-\ri H(t+\tau)}P_0 \re^{\ri Ht}\, \frac{\re^{\ri H\tau}-I}{\tau}f,\\
g_2(t,\tau,f)&=\re^{-\ri Ht}\frac{\re^{-\ri H\tau}-I}{\tau}P_0\re^{\ri Ht}f.
\end{align*}
For $g_1(t,\tau,f)$ we have
\begin{align}
\label{g1sum}
g_1(t,\tau,f)&=g_{11}(t,\tau,f)+g_{12}(t,\tau,f),
\end{align}
where
\begin{align*}
g_{11}(t,\tau,f)&=\re^{-\ri H(t+\tau)}P_0 \re^{\ri Ht}\,
\left(\frac{\re^{\ri H\tau}-I}{\tau}f-\ri Hf\right),\\
g_{12}(t,\tau,f)&=\re^{-\ri H(t+\tau)}P_0 \re^{\ri Ht}\, \left(\ri Hf\right).
\end{align*}
The unitarity of $\re^{-\ri H(t+\tau)}$ and $\re^{\ri Ht}$ jointly with the fact
that the orthogonal projection $P_0$ has unit norm, $\|P_0\|=1$, yields
\begin{align}
\label{g11}
\|g_{11}(t,\tau,f)\|\leq \left\|\re^{-\ri H(t+\tau)}\right\|
\left\|P_0\right\|\left\|\re^{\ri Ht}\right\|\left\|\frac{\re^{\ri
H\tau}-I}{\tau}f-\ri Hf\right\|\leq \left\|\frac{\re^{\ri
H\tau}-I}{\tau}f-\ri Hf\right\|.
\end{align}
Taking into account the strong differentiability of the group $\re^{\ri Ht}$ on
$\Dom(H)$, from  \eqref{g11} one concludes that
\begin{equation}
\label{g11l}
g_{11}(t,\tau,f)\to 0\quad\text{as}\quad \tau\to 0 \quad
(\text{for any $t\in\bbR$ and any $f\in\Dom(H)$}).
\end{equation}
At the same time, in view of the strong continuity of the group
$\re^{\ri Ht}$ on $\fH$,
\begin{equation}
\label{g12}
g_{12}(t,\tau,f)\mathop{\longrightarrow}\limits_{\tau\to
0} \re^{-\ri Ht}P_0 \re^{\ri Ht}\, \left(\ri Hf\right)= \ri P_t
Hf\quad (\text{for any $t\in\bbR$ and any $f\in\Dom(H)$}).
\end{equation}

As for the term $g_2(t,\tau,f)$, it is easy to observe that
\begin{align*}
g_2(t,\tau,f)& \mathop{\longrightarrow}\limits_{\tau\to 0} \re^{-\ri
Ht}(-\ri H)P_0\re^{\ri Ht}f \quad (\text{for any $t\in\bbR$ and any $f\in\Dom(H)$}).
\end{align*}
This follows again from the strong differentiability of the group
$\re^{\ri Ht}$ on $\Dom(H)$, taking into account that $\re^{\ri
Ht}f\in\Dom(H)$ whenever $f\in\Dom(H)$ and then $P_0\re^{\ri
Ht}f\in\Dom(H)$ by the hypothesis. Now it only remains to recollect
that $ \re^{-\ri Ht}(-\ri H)P_0\re^{\ri Ht}f=-\ri H \re^{-\ri
Ht}P_0\re^{\ri Ht}f= -\ri H P_tf, $ which means that
\begin{align}
\label{g2l} g_2(t,\tau,f)& \mathop{\longrightarrow}\limits_{\tau\to
0}  -\ri H P_tf \quad (\text{for any $t\in\bbR$ and any
$f\in\Dom(H)$}).
\end{align}
Combining this result with \eqref{gsum12}, \eqref{g1sum},
\eqref{g11l}, and \eqref{g12} completes the proof.
\end{proof}
\begin{remark}
\label{RHPstrong} Equality \eqref{UPSh} implies that, under the
hypothesis of Theorem \ref{Lmain}, the projection path \eqref{PT} is
a strong solution (on $\Dom(H)$) to the Cauchy problem
\begin{align}
\label{Sch1P}
\mathrm{i} \frac{d}{dt} P_t & =[H,P_t],\\
\label{Sch2P} P_t\big|_{t=0}&=P_0,
\end{align}
where $\frac{d}{dt}P_t=\dot{P}_t$ stands for the strong derivative
\eqref{Pder} and
\begin{equation}
\label{HPPH}
[H,P_t]:=HP_t-P_tH, \quad \Dom([H,P_t])=\Dom(H),
\end{equation}
denotes the commutator of $H$ and $P_t$.
\end{remark}
The following observation may be helpful in the study of variation
of the subspaces $\fP_t$ defined in \eqref{fPt}.
\begin{remark}
\label{Roff} The inclusion \eqref{PDom} implies that also
$\Ran\left(P^\perp_t\big|_{\Dom(H)}\right)\subset\Dom(H)$. This
means that, at any moment $t\in\bbR$, the commutator \eqref{HPPH}
may be written in the form
\begin{equation}
\label{HPPH1}
[H,P_t]=P^\perp_t HP_t-P_tHP^\perp_t.
\end{equation}
From \eqref{HPPH1} it follows that, with respect to the orthogonal
decomposition $\fH=\fP_t\oplus\fP^\perp_t$,  the operator $[H,P_t]$
admits representation in the form of the following $2\times2$  block
off-diagonal skew-symmetric operator matrix:
\begin{equation}
\label{HPPH2}
[H,P_t]=\begin{pmatrix}
0 & -P_t H\bigr|_{\fP^\perp_t}\\
P^\perp_t H\bigr|_{\fP_t} & 0
\end{pmatrix}, \quad t\in\bbR,
\end{equation}
which is considered on the domain
$\bigl(\Dom(H)\cap\fP_t\bigr)\oplus
\bigl(\Dom(H)\cap\fP^\perp_t\bigr)=\Dom(H)$.
\end{remark}

\begin{remark}
\label{uneq} One more observation is that for any $t\in\bbR$
the commutator \eqref{HPPH} of $H$ and $P_t$ remains unitary
equivalent to its value $[H,P_0]$ at $t=0$, namely,
\begin{equation}
\label{euneq} [H,P_t]=\mathrm{e}^{-\mathrm{i}
Ht}[H,P_0]\mathrm{e}^{\mathrm{i} Ht}, \quad t\in\bbR.
\end{equation}
\end{remark}

\section{Bounds for the speed of the subspace evolution}
\label{SBounds}
In this section we again assume that $H$ is a self-adjoint operator
on the Hilbert space $\fH$ and that $P_0$ is an orthogonal
projection on $\fH$ with $\Ran(P_0)=\fP_0\neq\{0\}$ satisfying the
condition \eqref{PDD}.

It is well known that the set of all orthogonal projections in the
Hilbert space $\mathfrak{H}$ (and hence the set of all subspaces of
$\mathfrak{H}$) is a metric space with the distance given by the
operator norm,
\begin{equation}
\label{dnaiv}
{\rho}(Q,R):=\|Q-R\|, \qquad
{\rho}(\mathfrak{Q},\mathfrak{R}):={\rho}(Q,R),
\end{equation}
where $Q$, $R$ are arbitrary orthogonal projections and
$\fQ$, $\fR$, their respective ranges.
It is worth mentioning that
$$
\|Q-R\|\leq 1
$$
for any two orthogonal projections $Q$ and $R$ in $\fH$ (see, e.g.,
\cite[Section 34]{AkhiG}) and, thus, we always have $\rho(\fQ,\fR)\leq
1$ for any subspaces $\fQ$ and $\fR$ in $\fH$.

We start with proving our first quantum speed limit for the subspace
variation based on the metric \eqref{dnaiv}. This ``na\"ive'' speed
limit follows directly from the equation \eqref{UPSh} under an
additional assumption that the commutator of the operators $H$ and
$P_0$ is a bounded operator on its domain
$\Dom\bigl([H,P_0]\bigr)=\Dom(H)$.

\begin{lemma}
\label{Lnaive} Assume the hypothesis of Theorem \ref{Lmain}. Assume,
in addition, that the commutator $[H,P_0]$ considered on
$\Dom\bigl([H,P_0]\bigr)=\Dom(H)$ is a bounded operator, that is,
\begin{equation}
\label{VHPdef}
V_{{H,P_0}}:=\mathop{\sup}\limits_{\scriptsize\begin{matrix}f\in\Dom(H)\\
\|f\|=1\end{matrix}}\|HP_0f-P_0Hf\|<\infty,
\end{equation}
and let $P_t=\re^{-\ri Ht}P_0\re^{\ri Ht}$, $t\in\bbR$. Then the
closure $\overline{[H,P_t]}$ of the commutator $[H,P_t]$,
$t\in\bbR$, is a bounded operator on the whole Hilbert space $\fH$
and
\begin{equation}
\label{HPV} \bigl\|\overline{[H,P_t]}\bigr\|=V_{{H,P_0}}
\quad\text{for any \,}t\in\bbR.
\end{equation}
Furthermore, the following inequality holds
\begin{equation}
\label{MBound}
\|P_{t}-P_s\|\leq
V_{{H,P_0}}\, |t-s|, \quad \text{for any \,}t,s\in\bbR.
\end{equation}
\end{lemma}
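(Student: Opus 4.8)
The plan is to derive both assertions from two facts already available: the unitary equivalence $[H,P_t]=\re^{-\ri Ht}[H,P_0]\re^{\ri Ht}$ from \eqref{euneq}, and the differential identity $\ri\dot P_t f=[H,P_t]f$ for $f\in\Dom(H)$ from Theorem \ref{Lmain}.

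First I would settle the boundedness claim and \eqref{HPV}. Since $[H,P_0]$ is bounded on the dense domain $\Dom(H)$, its closure $\overline{[H,P_0]}$ is a bounded operator on $\fH$ of norm $V_{H,P_0}$ (cf.\ \eqref{VHPdef}). For fixed $t$, I would use that $\re^{\pm\ri Ht}$ are unitary and, by \eqref{Dinv}, map $\Dom(H)$ \emph{onto} $\Dom(H)$; hence for $f\in\Dom(H)$, \eqref{euneq} gives $\|[H,P_t]f\|=\|[H,P_0]\,\re^{\ri Ht}f\|\le V_{H,P_0}\|f\|$, while substituting $f=\re^{-\ri Ht}g$ (legitimate by the surjectivity just mentioned) shows this bound is already attained on the unit sphere of $\Dom(H)$. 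Equivalently, $\overline{[H,P_t]}=\re^{-\ri Ht}\,\overline{[H,P_0]}\,\re^{\ri Ht}$ is a conjugation of $\overline{[H,P_0]}$ by unitaries, so it is bounded on all of $\fH$ with the same norm $V_{H,P_0}$, which is \eqref{HPV}.

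Next, for the Lipschitz bound \eqref{MBound} I would first reduce to the case $s=0$: the identity $P_t-P_s=\re^{-\ri Hs}(P_{t-s}-P_0)\re^{\ri Hs}$ with unitary conjugators gives $\|P_t-P_s\|=\|P_{t-s}-P_0\|$, so it suffices to show $\|P_r-P_0\|\le V_{H,P_0}\,|r|$. By Theorem \ref{Lmain}, for $f\in\Dom(H)$ the path $\tau\mapsto P_\tau f$ is differentiable with $\dot P_\tau f=-\ri\,\overline{[H,P_\tau]}f$; using the formula from the previous step together with the strong continuity of $\re^{\pm\ri H\tau}$, this derivative is norm-continuous in $\tau$ with $\|\dot P_\tau f\|\le V_{H,P_0}\|f\|$. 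The vector-valued fundamental theorem of calculus then yields $P_r f-P_0 f=\int_0^r \dot P_\tau f\,d\tau$ and hence $\|P_r f-P_0 f\|\le V_{H,P_0}\,|r|\,\|f\|$; since $\Dom(H)$ is dense in $\fH$ and $P_r-P_0$ is bounded, this extends to all $f\in\fH$, giving \eqref{MBound}.

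The main obstacle is one of care rather than depth: the commutator and its strong derivative live a priori only on $\Dom(H)$, so one must (i) verify that the operator norm, computed over the dense set $\Dom(H)$, already equals $V_{H,P_0}$ — this is where the surjectivity $\re^{\ri Ht}\Dom(H)=\Dom(H)$ from \eqref{Dinv} is essential for the lower bound — and (ii) justify the continuity of $\tau\mapsto\dot P_\tau f$ before invoking the fundamental theorem of calculus, after which density and boundedness propagate the inequality to the whole space. If one prefers to bypass the vector-valued integral, the same estimate follows by applying the scalar mean value theorem to $\tau\mapsto\lal(P_\tau-P_0)f,h\ral$ for $f\in\Dom(H)$, $h\in\fH$, and taking the supremum over $\|h\|=1$.
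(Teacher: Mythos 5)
Your proof is correct and follows essentially the same route as the paper: take the closure of the densely defined bounded commutator and use the unitary equivalence \eqref{euneq} to get \eqref{HPV}, then integrate the strong derivative $\dot P_\tau f=-\ri\,[H,P_\tau]f$ over $[s,t]$ and pass to all of $\fH$ by density for \eqref{MBound}. Your extra care --- checking that the supremum over $\Dom(H)$ is attained via $\re^{\ri Ht}\Dom(H)=\Dom(H)$, verifying continuity of $\tau\mapsto\dot P_\tau f$ before invoking the fundamental theorem of calculus, and the harmless reduction to $s=0$ --- only makes explicit steps the paper leaves implicit.
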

\begin{proof}
Since $H$ is a self-adjoint operator, its domain is dense in $\fH$.
Then the boundedness \eqref{VHPdef} of the commutator $[H,P_0]$
on $\Dom(H)$ implies that its closure $\overline{[H,P_0]}$ is a
bounded operator defined on the whole Hilbert space $\fH$.
Continuity of the norm implies
$\bigl\|\overline{[H,P_0]}\bigr\|=\|[H,P_0]\|=V_{{H,P_0}}$. By
Remark \ref{uneq}, for any $t\in\bbR$ the commutator $[H,P_t]$ is
unitary equivalent to the commutator $[H,P_0]$ and the same concerns
their closures. Hence, $\overline{[H,P_t]}$ is a bounded operator on
$\fH$ with the norm coinciding with that of $\overline{[H,P_0]}$,
and this proves \eqref{HPV}.

Now assume that $s,t\in\bbR$ and, for definiteness, let  $s<t$.
Since $\overline{\Dom(H)}=\fH$, for the norm of $P_t-P_s$ we have
\begin{equation}
\label{PPnorm}
\|P_t-P_s\|=\mathop{\sup}\limits_{\scriptsize\begin{matrix}f\in\fH\\
\|f\|=1\end{matrix}}\|P_t f-P_s f\|=
\mathop{\sup}\limits_{\scriptsize\begin{matrix}f\in\Dom(H)\\
\|f\|=1\end{matrix}}\|P_t f-P_s f\|.
\end{equation}
Then by Theorem \ref{Lmain} (in particular, by the strong
differentiability \eqref{Pder} of the path $P_t$) one concludes that
\begin{equation}
\label{PPnorm1}
\|P_t-P_s\|=
\mathop{\sup}\limits_{\scriptsize\begin{matrix}f\in\Dom(H)\\
\|f\|=1\end{matrix}}\left\|\int_s^t \dot{P}_\tau f d\tau\right\|
\leq \int_s^t \mathop{\sup}\limits_{\scriptsize\begin{matrix}f\in\Dom(H)\\
\|f\|=1\end{matrix}}\left\|\dot{P}_\tau f \right\|d\tau.
\end{equation}
Now by taking into account \eqref{UPSh} and \eqref{HPV} one arrives at
\begin{equation}
\label{PPnorm2}
\|P_t-P_s\|\leq
\int_s^t \mathop{\sup}\limits_{\scriptsize\begin{matrix}f\in\Dom(H)\\
\|f\|=1\end{matrix}}\|HP_\tau f-P_\tau Hf\| d\tau
= \int_s^t \bigl\|\overline{[H,P_\tau]}\bigr\| d\tau
=V_{{H,P_0}}\, (t-s),
\end{equation}
which completes the proof.
\end{proof}

Soon we will see that there is a stronger estimate of $\|P_t-P_s\|$
through $|t-s|$ than the one presented in \eqref{MBound}. The
stronger estimate (see inequality \eqref{TBound} in Theorem
\ref{Th2} below) is associated with another natural but much less
known metric on the set of all subspaces of $\fH$. This metric is
associated with the quantity \label{Pthet}
\begin{equation}
\label{rhoM}
\vartheta(\fQ,\fR):=\arcsin(\|Q-R\|)
\end{equation}
called the \textit{maximal angle} between the subspaces $\fQ$ and
$\fR$. The fact that the maximal angle \eqref{rhoM} is a metric has
been proven in 1993 by L.Brown \cite{Brown}. An alternative proof of
this fact may be found in \cite{AM-CAOT}.
Let us also refer to the discussion of the metric \eqref{rhoM} in
\cite{MS2015}.

Notice that, since the maximal angle is a metric, the triangle
inequality
\begin{align}
\label{trian} \vartheta(\fQ,\fR)&\leq \vartheta(\fQ,\mathfrak{S}) +
\vartheta(\mathfrak{S},\fR)
\end{align}
holds for any subspaces $\fQ,\fR,\mathfrak{S}\subset\fH$.

\begin{remark}
\label{rhoth} Inequality $x<\arcsin x$, $x\in(0,1]$, implies that
always $\rho(\fQ,\fR)<\vartheta(\fQ,\fR)$  if $\fQ\neq\fR$. Thus,
the metric $\vartheta$  is stronger than the metric $\rho$ in the
sense that the bound $\vartheta(\fQ,\fR)< c$ for some $c>0$
automatically requires that also $\rho(\fQ,\fR)<c$. The converse is
not true in general.
\end{remark}

\begin{remark}
\label{newr} One verifies by inspection that the non-negative
operator $(Q-R)^2$ is block diagonal with respect to the
decomposition  $\fH=\fR\oplus\fR^\perp$, more precisely,
\begin{equation}
(Q-R)^2=RQ^\perp R+R^\perp Q R^\perp,
\end{equation}
which means that
\begin{equation}
\label{impr}
\|Q-R\|=\max\bigl\{\|RQ^\perp R\|^{1/2},\|R^\perp Q R^\perp\|^{1/2}\bigr\}=
\max\bigl\{\|Q^\perp R\|, \|R^\perp Q\|\bigr\}.
\end{equation}
\end{remark}

\begin{remark}
The concept of maximal angle between subspaces can be traced back at
least to Krein, Krasnoselsky, and Milman \cite{KKM1948}. Under the
assumption that $(\mathfrak{Q},\mathfrak{R})$ is an ordered pair of
subspaces and $\mathfrak{Q}\neq\{0\}$, the notion of the (relative)
maximal angle between $\mathfrak{Q}$ and $\mathfrak{R}$ is applied
in  \cite{KKM1948} to the number
$\varphi(\mathfrak{Q},\mathfrak{R})\in[0,\pi/2]$ defined by
\begin{equation}
\label{t12}
\sin\varphi(\mathfrak{Q},\mathfrak{R})=\sup\limits_{x\in\mathfrak{Q},\,\|x\|=1}\mathop{\rm
dist}(x,\mathfrak{R}).
\end{equation}
Obviously, equality \eqref{t12} is equivalent to
\begin{equation}
\label{impr1}
\sin\varphi(\mathfrak{Q},\mathfrak{R})=\|Q^\perp R\|.
\end{equation}
If both $\mathfrak{Q}\neq\{0\}$ and $\mathfrak{R}\neq\{0\}$ then
\eqref{impr} implies
\begin{equation}
\label{tet}
\vartheta(\mathfrak{Q},\mathfrak{R})=\arcsin\bigl(\max\bigl\{\|Q^\perp
R\|,\|R^\perp Q\|\}\bigr)=\max\bigl\{\varphi(\mathfrak{Q},\mathfrak{R}),
\varphi(\mathfrak{R},\mathfrak{Q})\bigr\}.
\end{equation}
In contrast to $\varphi(\mathfrak{Q},\mathfrak{R})$, the maximal angle
$\vartheta(\mathfrak{Q},\mathfrak{R})$ is always symmetric with
respect to the interchange of the arguments $\mathfrak{Q}$ and
$\mathfrak{R}$. Moreover,
\begin{equation}
\label{P-Q1}
\varphi(\mathfrak{R},\mathfrak{Q})=\varphi(\mathfrak{Q},\mathfrak{R})=
\vartheta(\mathfrak{Q},\mathfrak{R})\quad \text{whenever
}\|Q-R\|<1.
\end{equation}
For $\|Q-R\|<1$, this is simply a consequence of the equality
 $\|Q^\perp R\|=\|R^\perp Q\|$, which is
easily deduced, e.g., from \cite[Corollary 3.4 (i) and Remark
3.6]{KMM2003}.
\end{remark}

\begin{remark}
\label{PhSense}
The maximal angle between subspaces admits a quantum-mechanical
interpretation. To this end, one may apply the concept of a
subspace-state of a quantum system. Namely, given a subspace
$\mathfrak{Q}\subset\mathfrak{H}$, one says that the system is in
the $\mathfrak{Q}$-state if it is in a pure state described by a
(non-specified) normalized vector $x\in\mathfrak{Q}$. By \eqref{t12}
and \eqref{tet} the quantity
$\cos^2\vartheta(\mathfrak{Q},\mathfrak{R})$ is then treated as a
minimum probability for a quantum system which is in a
$\mathfrak{Q}$-state to be found also in an $\mathfrak{R}$-state.
\end{remark}

By using the maximal-angle metric \eqref{rhoM} we obtain the following result.

\begin{theorem}
\label{Th2} Assume the hypothesis of Lemma \ref{Lnaive} and let
$\mathfrak{P}_\tau= \mathop{\mathrm{Ran}}\bigl(P_\tau\bigr)$,
$\tau\in\bbR$.
Then the following inequality holds:
\begin{equation}
\label{TBound}
\vartheta\bigl(\mathfrak{P}_s,\mathfrak{P}_t\bigr)\leq
V_{{H,P_0}}\,|t-s| \quad \text{for any \,}s,t\in\bbR.
\end{equation}
\end{theorem}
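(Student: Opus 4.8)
The plan is to deduce \eqref{TBound} from the "na\"ive" estimate \eqref{MBound} of Lemma \ref{Lnaive} together with the fact, recalled just above, that the maximal angle $\vartheta$ is a genuine metric, i.e.\ satisfies the triangle inequality \eqref{trian}. The key point to keep in mind is that one \emph{cannot} simply apply $\arcsin$ to both sides of \eqref{MBound}: that would only yield $\vartheta(\fP_s,\fP_t)\le\arcsin\bigl(\min\{V_{{H,P_0}}|t-s|,1\}\bigr)$, which is strictly weaker than \eqref{TBound} precisely in the regime $1\le V_{{H,P_0}}|t-s|<\pi/2$. The remedy is to cut the time interval into many short pieces — on each of which \eqref{MBound} is sharp after applying the monotone function $\arcsin$ — to add the resulting estimates along the chain by the triangle inequality, and then to pass to the limit using $\arcsin x=x+o(x)$ as $x\to0$.

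Concretely, I would argue as follows. Fix $s<t$ (the case $s=t$ being trivial, and $s>t$ following by the symmetry $\vartheta(\fQ,\fR)=\vartheta(\fR,\fQ)$), and abbreviate $V:=V_{{H,P_0}}$ and $\ell:=t-s$. For $n\in\bbN$ take the uniform partition $\tau_k=s+k\ell/n$, $k=0,1,\dots,n$, and choose $n$ large enough that $V\ell/n\le 1$. By \eqref{MBound} we have $\|P_{\tau_k}-P_{\tau_{k-1}}\|\le V\ell/n$ for each $k$, so, since $\arcsin$ is monotone increasing on $[0,1]$, the definition \eqref{rhoM} gives
\[
\vartheta(\fP_{\tau_{k-1}},\fP_{\tau_k})=\arcsin\|P_{\tau_k}-P_{\tau_{k-1}}\|\le\arcsin\!\Bigl(\tfrac{V\ell}{n}\Bigr),\qquad k=1,\dots,n.
\]
Applying the triangle inequality \eqref{trian} repeatedly along $\fP_s=\fP_{\tau_0},\fP_{\tau_1},\dots,\fP_{\tau_n}=\fP_t$ then yields
\[
\vartheta(\fP_s,\fP_t)\le\sum_{k=1}^{n}\vartheta(\fP_{\tau_{k-1}},\fP_{\tau_k})\le n\,\arcsin\!\Bigl(\tfrac{V\ell}{n}\Bigr).
\]
Letting $n\to\infty$ and using $\lim_{x\to0^+}\arcsin(x)/x=1$, hence $\lim_{n\to\infty}n\arcsin(V\ell/n)=V\ell$, I obtain $\vartheta(\fP_s,\fP_t)\le V\ell=V_{{H,P_0}}|t-s|$, which is \eqref{TBound}.

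I expect the only genuinely non-routine point to be the one flagged above: realizing that the bound must be "linearized" via a Riemann-sum argument rather than obtained in one line, and that it is exactly the metric property \eqref{trian} of the maximal angle that makes this work; everything else is elementary. An alternative would be to try to set up a differential inequality for $t\mapsto\vartheta(\fP_s,\fP_t)$ directly from Theorem \ref{Lmain}, but controlling the one-sided derivative of $\arcsin\|P_s-P_t\|$ is more delicate than the partition argument above, so I would not pursue that route.
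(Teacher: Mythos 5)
Your argument is correct and is essentially identical to the paper's own proof: the same uniform partition $\tau_j=s+j(t-s)/n$, the same application of the triangle inequality for the maximal-angle metric along the chain of intermediate subspaces, the same use of the bound \eqref{MBound} on each short piece (with $n$ chosen so that $V_{H,P_0}|t-s|/n\le 1$), and the same limit $n\to\infty$ using $\arcsin x\sim x$. Nothing to add.
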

\begin{proof}
Assume, for definiteness, that $s\neq t$ and set
\begin{equation}
\label{tauj}
\tau_j=s+j\,\,\frac{t-s}{n},\quad j=0,1,\ldots,n,
\end{equation}
where $n$ is a natural number, $n\in\bbN$. Notice that $\tau_0=s$
and $\tau_n=t$. Under the assumption that $n\geq 2$, by applying the
triangle inequality \eqref{trian} to the subspaces $\fP_s$, $\fP_t$,
and intermediate subspaces $\fP_{\tau_j}$, $j=1,2,\ldots,n-1$, one
arrives at the following  bound for the maximal angle
$\vartheta(\fP_s,\fP_t)$ between the subspaces $\fP_s$ and $\fP_t$:
\begin{equation}
\label{tQbound}
\vartheta(\fP_0,\fP_t) \leq
\sum_{j=1}^n \arcsin \bigl\|P_{\tau_j}-P_{\tau_{j-1}}\bigr\|.
\end{equation}
By \eqref{tauj} we have
\begin{equation}
\label{taun}
|\tau_j-\tau_{j-1}|=\frac{|t-s|}{n},\quad j=1,2,\ldots,n.
\end{equation}
Now take $n$ such that $V_{{H,P_0}}\frac{|t-s|}{n}\leq 1$. Then
combining  \eqref{tQbound} and \eqref{taun} with the estimate
\eqref{MBound} in Lemma \ref{Lnaive} yields
\begin{align}
\vartheta(\fP_0,\fP_t) \leq&
\sum_{j=1}^n \arcsin \bigl(V_{{H,P_0}}\, |\tau_j-\tau_{j-1}|\bigr)
\label{tQbound2}
= n\,\arcsin \frac{V_{{H,P_0}}|t-s|}{n}.
\end{align}
Passing in \eqref{tQbound2} to the limit $n\to\infty$ one arrives at
\eqref{TBound}, completing the proof.
\end{proof}
\begin{remark}
\label{Rem-on-naive} Given $s\neq t$ and $V_{P_0,H}>0$, the bound
\eqref{TBound} is more tight for $\|P_s-P_t\|$ than the bound
\eqref{MBound} (cf. Remark \ref{rhoth}).
\end{remark}
\begin{remark}
\label{RVHPp} Under the hypothesis of Lemma \ref{Lnaive}, by Remark
\ref{Roff} from \eqref{VHPdef} it follows that
\begin{equation}
\label{VPt} V_{{H,P_t}}=\|{P_t H P_t^\perp}\|=\|{P_t^\perp HP_t}\|\quad
(=\|\overline{P_t H P_t^\perp}\|=\|\overline{P_t^\perp
HP_t}\|)\quad\text{for any } t\in\bbR
\end{equation}
and, in particular,
\begin{equation}
\label{VHP} V_{{H,P_0}}=\|{P_0HP_0^\perp}\|=\|{P_0^\perp H P_0}\|.
\end{equation}
Therefore, the bound \eqref{TBound} may be interpreted in the sense
that only the off-diagonal entries  $P_0 H\bigr|_{\fP_0^\perp}$ and
$ P_0^\perp H\bigr|_{\fP_0}$ in the block matrix
representation \eqref{Hmatrix} of the Hamiltonian $H$ contribute
into the variation of the subspace $\mathfrak{P}_0$. If $H$ is block
diagonal with respect to the decomposition
$\mathfrak{H}=\mathfrak{P}_0\oplus\mathfrak{P}_0^\perp$ and, hence,
the subspace $\mathfrak{P}_0$ is reducing for $H$, it does not change with
time at all. In particular, none of the spectral subspaces of $H$
can be a subject of the time evolution.
\end{remark}

\begin{corollary}
\label{Cor1} Under the hypothesis of Lemma \ref{Lnaive}, suppose that
$T_\theta$ is the time when the maximal angle between the
initial subspace $\mathfrak{P}_0$ and a subspace in the path
$\mathfrak{P}_t$, $t\geq 0$, reaches the value of $\theta$,
$0<\theta\leq \frac{\pi}{2}$, that is,
\begin{equation}
\label{TtP}
\vartheta\bigl(\mathfrak{P}_0,\mathfrak{P}_{T_\theta})=\theta.
\end{equation}
Then\footnote{\label{pref1}For the case where $V_{H,P_0}=0$, which
implies that $\fP_0$ is a reducing subspace of $H$, one adopts the
convention that $T_\theta=\infty$.}
\begin{equation}
\label{TtPb} T_\theta\geq \frac{\theta}{V_{{H,P_0}}}\,.
\end{equation}
\end{corollary}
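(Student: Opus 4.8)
The plan is to read the estimate off directly from Theorem \ref{Th2}. First I would apply the inequality \eqref{TBound} with the particular choice $s=0$ and $t=T_\theta$, which gives $\vartheta(\fP_0,\fP_{T_\theta})\leq V_{{H,P_0}}\,T_\theta$. Combining this with the defining property \eqref{TtP} of $T_\theta$, namely $\vartheta(\fP_0,\fP_{T_\theta})=\theta$, immediately yields $\theta\leq V_{{H,P_0}}\,T_\theta$.

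The next step is simply to divide by $V_{{H,P_0}}$ to obtain \eqref{TtPb}. The one point requiring a word of care is the possibility that $V_{{H,P_0}}=0$. By Remark \ref{RVHPp} (see \eqref{VHP}) this forces $P_0HP_0^\perp=0$ and $P_0^\perp HP_0=0$, so that $\fP_0$ is reducing for $H$ and $P_t=P_0$ for all $t$; in that case $\vartheta(\fP_0,\fP_t)$ stays equal to zero and never reaches the prescribed value $\theta\in(0,\pi/2]$, which is precisely why the convention $T_\theta=\infty$ recorded in the footnote is adopted, and the bound \eqref{TtPb} then holds trivially.

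I do not anticipate any real obstacle: the corollary is a one-line consequence of Theorem \ref{Th2} together with the hypothesis, with the degenerate case $V_{{H,P_0}}=0$ dispatched by the stated convention. If desired, one could add the remark that $T_\theta$ is well defined as $\inf\{t\geq 0 : \vartheta(\fP_0,\fP_t)=\theta\}$ whenever this set is nonempty, using the strong continuity of the path $P_t$, $t\in\bbR$, established in Section \ref{SecPropPath}; but since the statement already presupposes that $T_\theta$ is such a time, this is not needed for the argument.
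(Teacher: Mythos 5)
Your argument is correct and is exactly the (implicit) proof intended by the paper: the corollary follows by setting $s=0$, $t=T_\theta$ in \eqref{TBound}, invoking \eqref{TtP}, and dividing by $V_{H,P_0}$, with the degenerate case $V_{H,P_0}=0$ handled by the footnote convention. Nothing further is needed.
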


\begin{example}
\label{exmpl} Let the Hamiltonian $H$ describe a two-level quantum
system with bound states $e_1$ and $e_2$, that is,
$\|e_1\|=\|e_2\|=1$, ${\langle} e_1,e_2{\rangle}=0$, the Hilbert
space $\fH$ is the span of the vectors $e_1$ and $e_2$, and
$$
H=E_1 {\langle} \cdot,e_1{\rangle} e_1+ E_2 {\langle}
\cdot,e_2{\rangle} e_2,
$$
where the binding energies $E_1$ and $E_2$ are supposed to be
different, $E_1\neq E_2$. Assume that $P_0$ is the orthogonal
projection on the one-dimensional subspace $\fP_0$ spanned by the
vector $e=\frac{1}{\sqrt{2}}(e_1+e_2)$. One immediately observes
that
\begin{equation}
\label{HPex}
[H,P_0]=\frac{E_2-E_1}{2}\bigl(\langle \cdot,e_1\rangle e_2-\langle \cdot,e_2\rangle e_1\bigr)
\end{equation}
and then for the norm of the commutator of $H$ and $P_0$ we get
\begin{equation}
\label{HPexnorm}
V_{H,P_0}=\bigl\|[H,P_0]\bigr\|=\frac{|E_2-E_1|}{2}.
\end{equation}
One also verifies by inspection that this norm may be written as
\begin{equation}
\label{HPnormEE}
V_{H,P_0}=\bigl(\|He\|^2-\langle He,e\rangle^2\bigr)^{1/2}.
\end{equation}
Furthermore, an elementary computation shows that  for any
$\tau\in\bbR$ the orthogonal projection $P_\tau=\re^{-\ri
H\tau}P_0\re^{\ri H\tau}$ is given by
\begin{equation}
\label{Ptau}
P_\tau=\frac{1}{2}\left(
\langle\cdot,e_1\rangle e_1+
\re^{-\ri(E_2-E_1)\tau}\langle \cdot,e_1\rangle e_2
+\re^{\ri(E_2-E_1)\tau}\langle \cdot,e_2\rangle e_1
+\langle\cdot,e_2\rangle e_2
\right).
\end{equation}
Then for any $s,t\in\bbR$ we have
\begin{equation}
\label{Pst}
P_t-P_s=\frac{1}{2}\left(
\re^{-\ri(E_2-E_1)t}-\re^{-\ri(E_2-E_1)s}\right)\langle \cdot,e_1\rangle e_2+
\frac{1}{2}\left(\re^{\ri(E_2-E_1)t}-\re^{\ri(E_2-E_1)s}\right)\langle \cdot,e_2\rangle e_1.
\end{equation}
The eigenvalues of the rank-two operator
\eqref{Pst} are
\begin{equation}
\label{lampm}
\lambda_\pm=\pm\sin\left(\frac{|E_2-E_1|}{2}|t-s|\right),
\end{equation}
which means in particular that
\begin{equation}
\label{ThetEx}
\vartheta(\fP_s,\fP_t)=\arcsin\|P_t-P_s\|=\frac{|E_2-E_1|}{2}|t-s|\,\text{\, whenever }\frac{|E_2-E_1|}{2}|t-s|\leq\frac{\pi}{2},
\end{equation}
where, as usually, $\fP_\tau=\Ran(P_\tau)$, $\tau\in\bbR$.
\end{example}

\begin{remark}
Example \ref{exmpl} is used in many papers on quantum speed limits
(see, e.g., \cite{DeCa,Frey,BBr,China}). In particular, this example
proves the sharpness of both the Mandelstam-Tamm and Margolus-Levitin
inequalities (see, e.g., \cite[Section 2.4]{DeCa}). Example \ref{exmpl}
also works for the bounds \eqref{TBound} and \eqref{TtPb}.  In this
example, due to \eqref{HPexnorm} and \eqref{ThetEx} both of
these bounds transform into equalities, which proves that both the
bounds \eqref{TBound} and \eqref{TtPb} are optimal.
\end{remark}

\begin{theorem}
\label{TTtheta} Assume the hypothesis of Theorem \ref{Lmain} and let
$\fP_\tau=\Ran\bigl(\re^{-\ri H\tau}P_0\re^{\ri H\tau}\bigr)$,
$\tau\in\bbR$.
Assume, in addition, that
\begin{equation}
\label{Delta} \Delta
E_{\mathfrak{P}_0}:=\sup\limits_{\scriptsize\begin{matrix}f\in\fP_0\cap\Dom(H)\\
\|f\|=1\end{matrix}} \bigl(\|Hf\|^2-{\langle}
Hf,f{\rangle}^2\bigr)^{1/2}<\infty.
\end{equation}
Then
\begin{equation}
\label{TBoundE}
\vartheta\bigl(\mathfrak{P}_s,\mathfrak{P}_t\bigr)\leq
\Delta
E_{\mathfrak{P}_0}\,|t-s| \quad \text{for any \,}t,s\in\bbR,
\end{equation}
and\footnote{\label{pref2}For the case where $\Delta
E_{\mathfrak{P}_0}=0$, which implies that $\fP_0$ is the
eigenspace associated with an eigenvalue of $H$, one adopts the
convention that $T_\theta=\infty$ (cf. footnote on page
\pageref{pref1}).}
\begin{equation}
\label{TtPb1} T_\theta\geq \frac{\theta}{\Delta
E_{\mathfrak{P}_0}}\,,
\end{equation}
where $\theta$, $T_\theta$ are the same as in Corollary \ref{Cor1}.
\end{theorem}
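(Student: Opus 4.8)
The plan is to deduce Theorem~\ref{TTtheta} from Theorem~\ref{Th2}, by first showing that the mere finiteness of $\Delta E_{\mathfrak{P}_0}$ already forces the commutator $[H,P_0]$ to be a bounded operator on $\Dom(H)$, with $V_{{H,P_0}}\le\Delta E_{\mathfrak{P}_0}$. Once this is in hand, the hypothesis of Lemma~\ref{Lnaive} is met, so Theorem~\ref{Th2} gives $\vartheta(\mathfrak{P}_s,\mathfrak{P}_t)\le V_{{H,P_0}}\,|t-s|\le\Delta E_{\mathfrak{P}_0}\,|t-s|$, which is \eqref{TBoundE}; and \eqref{TtPb1} then follows from \eqref{TBoundE} by repeating verbatim the argument of Corollary~\ref{Cor1} (with the convention $T_\theta=\infty$ in the degenerate case $\Delta E_{\mathfrak{P}_0}=0$, as in the footnote to the theorem). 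So the whole proof reduces to the operator-norm bound $V_{{H,P_0}}\le\Delta E_{\mathfrak{P}_0}$.

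To get that bound I would first estimate the off-diagonal block $C=P_0^\perp H|_{\fP_0}$ of \eqref{HM4}. Let $\psi\in\fP_0\cap\Dom(H)$ with $\|\psi\|=1$. Since $H$ is self-adjoint and $\psi\in\fP_0$, the number $\langle H\psi,\psi\rangle=\langle P_0H\psi,\psi\rangle$ is real, and the orthogonal splitting $H\psi=P_0H\psi+P_0^\perp H\psi$ gives
\[
\|H\psi\|^2-\langle H\psi,\psi\rangle^2=\bigl(\|P_0H\psi\|^2-\langle P_0H\psi,\psi\rangle^2\bigr)+\|P_0^\perp H\psi\|^2\ \ge\ \|P_0^\perp H\psi\|^2,
\]
the bracketed term being nonnegative by the Cauchy--Schwarz inequality (as $\|\psi\|=1$). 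Hence $\|C\psi\|=\|P_0^\perp H\psi\|\le\Delta E_{\mathfrak{P}_0}$ for every unit $\psi\in\Dom(C)=\fP_0\cap\Dom(H)$. By \eqref{basdense} this domain is dense in $\fP_0$, so $C$, regarded as a densely defined operator from the Hilbert space $\fP_0$ to the Hilbert space $\fP_0^\perp$, has a bounded closure $\overline C$ with $\|\overline C\|\le\Delta E_{\mathfrak{P}_0}$.

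Next I would transfer this to the other off-diagonal block $B=P_0H|_{\fP_0^\perp}$ of \eqref{HM3}. For $v\in\fP_0\cap\Dom(H)$ and $u\in\fP_0^\perp\cap\Dom(H)$ the self-adjointness of $H$ gives $\langle Cv,u\rangle=\langle Hv,u\rangle=\langle v,Hu\rangle=\langle v,Bu\rangle$, so $B\subset C^{*}$, with the adjoint taken between $\fP_0$ and $\fP_0^\perp$; since $C$ is densely defined with bounded closure, $C^{*}=(\overline C)^{*}$ is bounded and $\|C^{*}\|=\|\overline C\|\le\Delta E_{\mathfrak{P}_0}$, so $\|Bu\|\le\Delta E_{\mathfrak{P}_0}\|u\|$ for all $u\in\Dom(B)$. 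Finally, for arbitrary $f\in\Dom(H)$ I decompose $f=f_0+f_0^{\perp}$ by \eqref{basic}, with $f_0=P_0f\in\fP_0\cap\Dom(H)$ and $f_0^{\perp}=P_0^{\perp}f\in\fP_0^{\perp}\cap\Dom(H)$. Using the form \eqref{HPPH1} of the commutator, $[H,P_0]f=P_0^{\perp}HP_0f-P_0HP_0^{\perp}f=Cf_0-Bf_0^{\perp}$, the two summands lying in the mutually orthogonal subspaces $\fP_0^{\perp}$ and $\fP_0$; hence
\[
\|[H,P_0]f\|^2=\|Cf_0\|^2+\|Bf_0^{\perp}\|^2\le\Delta E_{\mathfrak{P}_0}^2\bigl(\|f_0\|^2+\|f_0^{\perp}\|^2\bigr)=\Delta E_{\mathfrak{P}_0}^2\|f\|^2 .
\]
Taking the supremum over unit $f\in\Dom(H)$ yields \eqref{VHPdef} together with $V_{{H,P_0}}\le\Delta E_{\mathfrak{P}_0}$, which is all that was needed.

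Most of this is routine (Pythagoras and Cauchy--Schwarz). The one step that requires genuine care is the inclusion $B\subset C^{*}$ and the correct interpretation of the adjoint in the pair of ``block'' Hilbert spaces $\fP_0,\fP_0^{\perp}$ rather than in $\fH$ --- this is exactly where \eqref{basic} (which identifies $\Dom(C)$ and $\Dom(B)$ as the $\fP_0$- and $\fP_0^{\perp}$-parts of $\Dom(H)$) and \eqref{basdense} (density of $\Dom(C)$ in $\fP_0$) are essential. I expect this bookkeeping, rather than any analytic difficulty, to be the main obstacle; once $V_{{H,P_0}}\le\Delta E_{\mathfrak{P}_0}$ is established, \eqref{TBoundE} and \eqref{TtPb1} follow immediately from Theorem~\ref{Th2} and the argument of Corollary~\ref{Cor1}.
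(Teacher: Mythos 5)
Your proposal is correct and follows essentially the same route as the paper: both reduce the theorem to the bound $V_{H,P_0}\le\Delta E_{\mathfrak{P}_0}$ via the identity $\|P_0^\perp Hf\|^2=\|Hf\|^2-\|P_0Hf\|^2$ together with Cauchy--Schwarz, and then invoke Theorem~\ref{Th2} and Corollary~\ref{Cor1}. The only difference is that you spell out explicitly why the bound on the block $C=P_0^\perp H|_{\fP_0}$ transfers to $B\subset C^{*}$ and hence to the whole commutator, whereas the paper delegates this to the equality \eqref{VHP} of Remark~\ref{RVHPp}; your version is slightly more self-contained but not a different argument.
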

\begin{proof}
We start with the elementary observation that
\begin{equation}
\label{TPif} \|P_0^\perp Hf\|=\bigl(\|Hf\|^2-\|P_0Hf\|^2\bigr)^{1/2}\quad
\text{for any }f\in\Dom(H).
\end{equation}
If, in addition, $f\in\fP_0$ and $\|f\|=1$ then automatically $\|P_0y\|\geq|\langle
y,f\rangle|$  for any $y\in\fH$. In particular,
\begin{equation}
\label{Py}
\|P_0 Hf\|\geq\langle Hf,f\rangle\quad\text{for any }f\in\fP_0\cap\Dom(H),\, \|f\|=1.
\end{equation}
Hence, from \eqref{TPif} it follows that
\begin{equation}
\label{TPif1} \|P_0^\perp Hf\|\leq
\bigl(\|Hf\|^2-\langle Hf,f\rangle^2\bigr)^{1/2}\quad
\text{for any }f\in\fP_0\cap\Dom(H), \,\|f\|=1.
\end{equation}
Meanwhile, by  Remark \ref{RVHPp} (see second equality in \eqref{VHP}) we have
\begin{equation}
\label{VHPbf1}
V_{H,P_0}=\sup\limits_{\scriptsize\begin{matrix}f\in\Dom(H)\\
\|f\|=1\end{matrix}}\|P_0^\perp H P_0 f\|=
\sup\limits_{\scriptsize\begin{matrix}f\in\fP_0\cap\Dom(H)\\
\|f\|=1\end{matrix}}\|P_0^\perp H f\|,
\end{equation}
by taking into account that, by the hypothesis, the linear set $\Dom(H)$
is invariant under $P_0$ and then $P_0\Dom(H)=\fP_0\cap\Dom(H)$ (see
equalities \eqref{PDD}--\eqref{main}).
By combining \eqref{VHPbf1} with \eqref{Delta} and \eqref{TPif1}, one obtains
\begin{equation}
\label{VHPbf2}
V_{H,P_0}\leq  \Delta E_{\mathfrak{P}_0}.
\end{equation}
Then \eqref{TBoundE} follows from the estimate \eqref{TBound} in
Theorem \ref{Th2}, while \eqref{TtPb1} is implied by the bound
\eqref{TtPb} in Corollary \ref{Cor1}. The proof is complete.
\end{proof}

\begin{remark}
We underline that the above proof is based on the inequality
\eqref{VHPbf2}. Due to this inequality, the hypothesis of Theorem
\ref{TTtheta} implies the one of Theorem \ref{Th2} and the bound
\eqref{TtPb} implies the bound \eqref{TtPb1}. In this sense, Theorem
\ref{TTtheta} is a weaker version of Theorem \ref{Th2} but the bound
\eqref{TtPb1} much closer resembles a Mandelstam-Tamm-Fleming bound
\eqref{Flem}. (Also notice that in the case of a one-dimensional
subspace $\fP_0$ inequality \eqref{VHPbf2} turns into equality and
both the bounds \eqref{TtPb} and \eqref{TtPb1} reduce to the
Mandelstam-Tamm-Fleming bound \eqref{Flem} with the energy
dispersion \eqref{DelE}  involving a state vector $\psi_0$ spanning
$\fP_0$.)
\end{remark}

\begin{remark}
\label{remref}
Combining equality \eqref{ThetEx} with equalities \eqref{HPexnorm}
and \eqref{HPnormEE} shows that in Example \ref{exmpl} both bounds \eqref{TBoundE} and
\eqref{TtPb1} turn into precise equalities. Thus, Example
\ref{exmpl} proves that these bounds are sharp.
\end{remark}

The following proposition is verified by direct inspection.

\begin{proposition}
\label{LTT}
Assume that $T$ is a linear operator on a Hilbert space $\fH$ with
domain $\Dom(T)$. Than the following identity holds
\begin{align}
\label{Id1}
&\| Tx\|^2-|\lal Tx,x\ral|^2=\|(T-cI)x\|^2-|\lal (T-cI)x,x\ral|^2\\
\nonumber
&\hspace*{0.3cm}\text{ for any\, $c\in\mathbb{C}$ and any\, $x\in\Dom(T)$, $\|x\|=1$}.
\end{align}
\end{proposition}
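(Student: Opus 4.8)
\medskip

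The plan is to verify \eqref{Id1} by a direct expansion, treating it purely as a statement about the single unit vector $x$, so that neither denseness of $\Dom(T)$ nor closedness of $T$ is needed — mere linearity of $T$ suffices. First I would fix $c\in\bbC$ and $x\in\Dom(T)$ with $\|x\|=1$ and expand the squared norm appearing on the right-hand side; since the inner product is linear in its first entry,
\[
\|(T-cI)x\|^2=\lal Tx-cx,\,Tx-cx\ral=\|Tx\|^2-\overline{c}\,\lal Tx,x\ral-c\,\lal x,Tx\ral+|c|^2\|x\|^2 .
\]
Using $\|x\|=1$ together with $\lal x,Tx\ral=\overline{\lal Tx,x\ral}$, the right-hand side collapses to $\|Tx\|^2-2\Real\bigl(\overline{c}\,\lal Tx,x\ral\bigr)+|c|^2$.

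Next I would handle the inner-product term in the same spirit. Because $\lal x,x\ral=1$ we have $\lal (T-cI)x,x\ral=\lal Tx,x\ral-c$, and therefore
\[
\bigl|\lal (T-cI)x,x\ral\bigr|^2=\bigl|\lal Tx,x\ral-c\bigr|^2=\bigl|\lal Tx,x\ral\bigr|^2-2\Real\bigl(\overline{c}\,\lal Tx,x\ral\bigr)+|c|^2 .
\]
Subtracting this from the previous display, the two $c$-dependent contributions $-2\Real\bigl(\overline{c}\,\lal Tx,x\ral\bigr)$ and $+|c|^2$ occur identically in both expansions and hence cancel, leaving exactly $\|Tx\|^2-\bigl|\lal Tx,x\ral\bigr|^2$, which is \eqref{Id1}.

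There is essentially no obstacle: the argument is a one-line polarization computation, and the only place any care is needed is the normalization $\|x\|=1$, which is precisely what forces $\lal (T-cI)x,x\ral=\lal Tx,x\ral-c$ and $\|cx\|^2=|c|^2$, so that the shift $T\mapsto T-cI$ alters both terms by the same amount. Conceptually the identity records that the ``dispersion'' $\|Tx\|^2-|\lal Tx,x\ral|^2$ of $T$ in the state $x$ is invariant under translating $T$ by a scalar multiple of the identity — the same invariance that lets one replace $H$ by $H-cI$ in the quantity $\Delta E_{\fP_0}$ of \eqref{Delta}.
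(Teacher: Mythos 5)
Your computation is correct and is exactly the ``direct inspection'' the paper alludes to (the paper gives no written proof, only the remark that the identity is verified by inspection): expanding $\|(T-cI)x\|^2$ and $|\lal (T-cI)x,x\ral|^2$ under the normalization $\|x\|=1$ and observing that the terms $-2\Real\bigl(\overline{c}\,\lal Tx,x\ral\bigr)+|c|^2$ cancel is precisely the intended argument. Your closing remark about the invariance of the dispersion under $T\mapsto T-cI$ also matches how the proposition is used in Lemma \ref{cMm}.
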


The statement below is rather well known. We present and prove it
only for convenience of the reader.

\begin{lemma}
\label{cMm}
Let $T$ be a bounded self-adjoint operator on the Hilbert space $\fH$. Then the
following inequalities hold:
\begin{equation}
\label{AMm}
0\leq \|Tx\|^2-\lal Tx,x\ral^2\leq \frac{1}{4}(M-m)^2, \quad \text{for any\, $x\in\fH$, $\|x\|=1$,}
\end{equation}
where $m=\min\bigl(\spec(T)\bigr)$ and $M=\max\bigl(\spec(T)\bigr)$ are the upper and
lower bounds of the spectrum of $T$, respectively.
\end{lemma}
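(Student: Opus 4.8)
The lower bound is immediate from the Cauchy--Schwarz inequality: for $\|x\|=1$ one has $|\lal Tx,x\ral|\leq\|Tx\|\,\|x\|=\|Tx\|$, and since $T$ is self-adjoint $\lal Tx,x\ral$ is real, so $\lal Tx,x\ral^2\leq\|Tx\|^2$. Thus the content of the lemma is the upper bound, and the plan is to reduce it to a statement about the norm of a shifted operator by exploiting the translation invariance recorded in Proposition \ref{LTT}.

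The key step is to apply Proposition \ref{LTT} with the particular choice $c=\tfrac{1}{2}(M+m)$ (which is real, so $T-cI$ is again bounded and self-adjoint). Since $T$ is self-adjoint, $|\lal (T-cI)x,x\ral|^2=\lal (T-cI)x,x\ral^2$, and the identity \eqref{Id1} gives, for $\|x\|=1$,
\begin{equation*}
\|Tx\|^2-\lal Tx,x\ral^2=\|(T-cI)x\|^2-\lal (T-cI)x,x\ral^2\leq\|(T-cI)x\|^2\leq\|T-cI\|^2,
\end{equation*}
where in the middle we again used $\lal (T-cI)x,x\ral^2\geq 0$ and at the end $\|x\|=1$.

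It then remains to compute $\|T-cI\|$. By the spectral mapping theorem the spectrum of the self-adjoint operator $T-cI$ is $\spec(T)-c\subset[m-c,M-c]$, and with $c=\tfrac{1}{2}(M+m)$ this interval is exactly $\bigl[-\tfrac{1}{2}(M-m),\tfrac{1}{2}(M-m)\bigr]$. Hence $\|T-cI\|=\max\{|m-c|,|M-c|\}=\tfrac{1}{2}(M-m)$, and squaring yields $\|Tx\|^2-\lal Tx,x\ral^2\leq\tfrac{1}{4}(M-m)^2$, which is \eqref{AMm}.

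There is no real obstacle here; the only point requiring a line of justification is the passage from self-adjointness to $|\lal (T-cI)x,x\ral|^2=\lal (T-cI)x,x\ral^2\geq 0$ so that dropping that term is legitimate, and the elementary spectral computation of $\|T-cI\|$ via $\spec(T-cI)=\spec(T)-c$. The choice $c=\tfrac{1}{2}(M+m)$ is exactly the one that makes the shifted spectrum symmetric about the origin, which is why the bound comes out as $\tfrac14(M-m)^2$ rather than something larger.
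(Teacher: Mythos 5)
Your proof is correct and follows essentially the same route as the paper: both use the lower bound via Cauchy--Schwarz, then apply Proposition \ref{LTT} with $c=\tfrac{1}{2}(M+m)$, drop the non-negative term, and use $\|T-cI\|=\tfrac{1}{2}(M-m)$. The only difference is that you spell out the spectral computation of $\|T-cI\|$, which the paper states without proof.
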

\begin{proof}
The left inequality in \eqref{AMm} is obvious since $|\lal
Tx,x\ral|\leq \|Tx\|\|x\|=\|Tx\|$ because of $\|x\|=1$. Now
choose $c=\frac{1}{2}(m+M)$. By Proposition \ref{LTT} one infers
that
\begin{equation}
\label{AAp}
\|Tx\|^2-\lal Tx,x\ral^2=\|T_cx\|^2-|\lal T_c x,x\ral|^2\quad \text{for any\, $x\in\fH$, $\|x\|=1$},
\end{equation}
where $T_c=T-cI$. Notice that $\|T_c\|=\frac{1}{2}(M-m)$. Then from \eqref{AAp} it follows that
\begin{align}
\|Tx\|^2-\lal Tx,x\ral^2&\leq\|T_cx\|^2\leq \frac{1}{4}(M-m)^2,
\end{align}
proving the right inequality in \eqref{AMm} and, thus, completing
the whole proof.
\end{proof}

\begin{remark}
Combining equalities \eqref{HPexnorm} and \eqref{HPnormEE} in Example \ref{exmpl} proves that the upper
bound in \eqref{AMm} is sharp.
\end{remark}

Our final bound only concerns the special case of bounded
Hamiltonians.

\begin{theorem}
\label{ThFin} Let $H$ be a bounded self-adjoint operator
on the Hilbert space $\fH$ and let
\begin{equation}
\label{Emima} E_{\rm
min}(H)=\min\bigl(\mathop{\mathrm{spec}}(H)\bigr),\quad E_{\rm
max}(H)=\max\bigl(\mathop{\mathrm{spec}}(H)\bigr)
\end{equation}
be the upper and lower bounds of the spectrum of $H$, respectively.
Let $P_t$, $t\geq 0$, be the projection path \eqref{PT} where $P_0$
is an orthogonal projection in $\mathfrak{H}$. Then
\begin{equation}
\label{FBound} \vartheta\bigl(\fP_s,\fP_t\bigr)\leq \frac{E_{\rm
max}(H)-E_{\rm min}(H)}{2}\, |t-s|,\quad \text{for any
\,}s,t\in\bbR,
\end{equation}
where $\fP_\tau= \Ran(P_\tau)$, $\tau\in\bbR$. Furthermore, the
following lower bound holds\footnote{For the case where $E_{\rm
max}(H)-E_{\rm min}(H)=0$, which implies that $H$ is a multiple of
unity, one adopts the convention that $T_\theta=\infty$ (cf. footnotes on
pages \pageref{pref1} and \pageref{pref2}).}:
\begin{equation}
\label{TtPbE}
T_\theta\,
\geq \frac{2\theta}{E_{\rm max}(H)-E_{\rm min}(H)}\,,
\end{equation}
where $\theta$ and $T_\theta$ are the same as in Corollary \ref{Cor1}.
\end{theorem}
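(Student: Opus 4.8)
The plan is to deduce Theorem \ref{ThFin} from Theorem \ref{TTtheta} by combining it with the elementary variance bound of Lemma \ref{cMm}. First I would note that boundedness of $H$ renders all the standing hypotheses automatic: since $\Dom(H)=\fH$, the invariance requirement \eqref{PDD} holds trivially for an arbitrary orthogonal projection $P_0$, so the hypothesis of Theorem \ref{Lmain} is in force; and since $\|Hf\|^2-\langle Hf,f\rangle^2\leq\|H\|^2$ for every unit vector $f$, the quantity $\Delta E_{\fP_0}$ of \eqref{Delta} is finite, so the hypothesis of Theorem \ref{TTtheta} holds as well.

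Next comes the one computational step: bounding $\Delta E_{\fP_0}$ by half the spectral diameter of $H$. I would apply Lemma \ref{cMm} with $T=H$, $m=E_{\rm min}(H)$ and $M=E_{\rm max}(H)$, obtaining
\begin{equation*}
\|Hf\|^2-\langle Hf,f\rangle^2\leq\frac{1}{4}\bigl(E_{\rm max}(H)-E_{\rm min}(H)\bigr)^2
\end{equation*}
for every unit vector $f\in\fH$, in particular for every unit $f\in\fP_0\cap\Dom(H)=\fP_0$. Passing to the supremum over such $f$ gives $\Delta E_{\fP_0}\leq\frac{1}{2}\bigl(E_{\rm max}(H)-E_{\rm min}(H)\bigr)$.

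Finally, I would feed this into Theorem \ref{TTtheta}: \eqref{TBoundE} becomes $\vartheta(\fP_s,\fP_t)\leq\Delta E_{\fP_0}\,|t-s|\leq\frac{1}{2}\bigl(E_{\rm max}(H)-E_{\rm min}(H)\bigr)|t-s|$, which is \eqref{FBound}, while \eqref{TtPb1} together with the defining relation $\vartheta(\fP_0,\fP_{T_\theta})=\theta$ gives $\theta\leq\frac{1}{2}\bigl(E_{\rm max}(H)-E_{\rm min}(H)\bigr)T_\theta$, i.e.\ \eqref{TtPbE}. There is no real obstacle; the only point to spell out is the degenerate case $E_{\rm max}(H)=E_{\rm min}(H)$, in which $H$ is a multiple of $I$, the subspace $\fP_0$ reduces $H$, the path $P_t$ is constant, and $T_\theta=\infty$ by the adopted convention, consistently with the vanishing right-hand side of \eqref{TtPbE}. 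One could alternatively bypass $\Delta E_{\fP_0}$ and bound $V_{H,P_0}=\|[H,P_0]\|$ directly, noting that $[H,P_0]=[H-cI,P_0]$ with $c=\frac{1}{2}\bigl(E_{\rm min}(H)+E_{\rm max}(H)\bigr)$ is block off-diagonal with norm at most $\|H-cI\|=\frac{1}{2}\bigl(E_{\rm max}(H)-E_{\rm min}(H)\bigr)$, and then invoke Theorem \ref{Th2} and Corollary \ref{Cor1}; but routing through Theorem \ref{TTtheta} keeps the result closest to the Mandelstam--Tamm--Fleming form.
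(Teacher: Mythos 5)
Your proposal is correct and follows essentially the same route as the paper: bound $\Delta E_{\mathfrak{P}_0}$ by $\tfrac{1}{2}\bigl(E_{\rm max}(H)-E_{\rm min}(H)\bigr)$ via Lemma \ref{cMm} and then invoke Theorem \ref{TTtheta} to obtain \eqref{FBound} and \eqref{TtPbE}. Your extra remarks on the automatic validity of the hypotheses, the degenerate case, and the alternative route through $V_{H,P_0}$ are sound but not needed beyond what the paper records.
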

\begin{proof}
In the case under consideration, the quantity $E_{\mathfrak{P}_0}$
from \eqref{Delta} may be rewritten as
\begin{align}
\label{DeltaB}
\Delta E_{\mathfrak{P}_0}=&\sup\limits_{\scriptsize\begin{matrix}f\in\fP_0\\
\|f\|=1\end{matrix}} \bigl(\|Hf\|^2-{\langle}
Hf,f{\rangle}^2\bigr)^{1/2}
\leq\sup\limits_{\scriptsize\begin{matrix}f\in\fH\\ \|f\|=1
\end{matrix}} \bigl(\|Hf\|^2-{\langle}
Hf,f{\rangle}^2\bigr)^{1/2},
\end{align}
since the self-adjoint operator $H$ is bounded and $\Dom(H)=\fH$.
Then by Lemma \ref{cMm} one concludes that
\begin{equation}
\label{EEmima} \Delta E_{\mathfrak{P}_0}\leq \frac{E_{\rm
max}(H)-E_{\rm min}(H)}{2},\quad \text{for any subspace \,}\fP_0\subset\fH.
\end{equation}
Now inequality \eqref{TBoundE} in Theorem \ref{TTtheta} implies the
estimate \eqref{FBound} while the lower bound \eqref{TtPbE} follows
from the lower bound \eqref{TtPb1}, and this completes the proof.
\end{proof}

\section{Summary and future perspectives}
\label{sConcl}

This paper is aimed in establishing sharp lower bounds on the time
required for an initial state subspace $\fP_0$ of a quantum system
described by a Hamiltonian $H$ to evolve into another subspace
particularly positioned with respect to the subspace $\fP_0$. The
operator $H$ is assumed to be time-independent and self-adjoint; it
is allowed to be unbounded. In the latter case, we require, in
addition, that the domain $\Dom(H)$ of $H$ is invariant under the
orthogonal projection $P_0$ on $\fP_0$, that is, $P_0 \Dom(H)\subset
\Dom(H)$. As a measure of the difference between the subspaces
$\fP_s$ and $\fP_t$, $s,t\in\bbR$, in the subspace evolution path
generated by $H$ out of $\fP_0$ we use the maximal angle
$\vartheta(\fP_s,\fP_t)$ between them, defined as in
\eqref{thetmax}.

Our first principal result (see Theorem \ref{Th2}) is as follows.
Assume that the commutator of $H$ and $P_0$ is a bounded operator on
$\Dom(H)$ and let $V_{H,P_0}=\|[H,P_0]\|$. Then  for any
$s,t\in\bbR$ the maximal angle between the subspaces $\fP_s$ and
$\fP_t$ satisfies inequality  $\vartheta(\fP_s,\fP_t)\leq
V_{H,P_0}|s-t|$. Furthermore, we show that this inequality is
optimal. Thus, it is the quantity $V_{H,P_0}$ that bounds the
maximal possible ``angular'' speed of variation of the subspaces in
the path $\fP_t$, $t\in\bbR$. In particular, we make the conclusion
that $T_\theta\geq \theta/V_{{H,P_0}}$ where $T_\theta$ is the first
time moment when the maximal angle between $\fP_0$ and $\fP_t$
reaches the value of $\theta$, $0<\theta\leq \frac{\pi}{2}$ (see
Corollary \ref{Cor1}).

The second principal result of the paper is based on the bound
\eqref{VHPbf2}, $V_{H,P_0}\leq \Delta E_{\mathfrak{P}_0}$, where
$\Delta E_{\mathfrak{P}_0}$ denotes the maximum energy dispersion
\eqref{Delta} over the subspace $\fP_0$. By using this bound we
derive for $T_\theta$ another sharp lower estimate
$T_\theta\geq \theta/ \Delta E_{\mathfrak{P}_0}$ (see Theorem
\ref{TTtheta} and Remark \ref{remref}) which is, in general, weaker
than our previous lower limit $T_\theta\geq \theta/V_{{H,P_0}}$.
However the estimate $T_\theta\geq \theta/ \Delta
E_{\mathfrak{P}_0}$ is featuring much closer resemblance to a
Mandelstam-Tamm-Fleming bound \eqref{Flem}, which is a particular
case of it.

Surely, it is also interesting to have  for $T_\theta$  a lower
bound written directly in terms of the spectrum of the Hamiltonian
$H$. We have found a bound of such a kind, namely the bound
\eqref{TtPbE} in Theorem \ref{ThFin}. It should be underlined,
however, that this bound only works in the case where
$H$ is a bounded operator.

We believe that the established bounds on the speed of evolution of
a quantum subspace admit an extension to the case of some unbounded
time-dependent self-adjoint Hamiltonians, and we work on this. It is
worth noting that this extension is not straightforward since,
unlike the one-parameter evolution semigroup entries \eqref{UU1},
the corresponding propagators do not commute with a time-dependent
Hamiltonian.

Another challenging problem consists in establishing bounds on the
evolution speed for a subspace whose variation is governed by a
non-Hermitian Hamiltonian. Of course, the general non-Hermitian case
should be extremely difficult. Furthermore, this case is not of
direct interest for quantum physics. However, special cases where a
non-Hermitian Hamiltonian is similar to a self-adjoint operator are
definitely of interest. This concerns, in particular, the case of
$PT$-symmetric Hamiltonians with real spectrum where some quantum
speed limits have already been established for the evolution of pure
states (see \cite{BBr}, \cite{China}, and references therein). One
may expect that in this case quantum speed limits exist for the
evolution of subspaces as well.


\end{document}